\title{Linear Network Coding Based Fast Data Synchronization for Wireless Ad Hoc Networks with Controlled Topology}
\author{Die Hu\inst{1,2}, Xuejun Zhu\inst{2}, Min Gong\inst{2}, Shaoshi Yang\inst{1,3,*} \corinfo{shaoshi.yang@bupt.edu.cn}
}
\address[1]{School of Information and Communication Engineering, Beijing University of Posts and Telecommunications, Beijing 100876, China}
\address[2]{China Academy of Launch Vehicle Technology, Beijing 100076, China}
\address[3]{Key Laboratory of Universal Wireless Communications, Ministry of Education, Beijing 100876, China}
\begin{document}

\maketitle

\begin{abstract}
Fast data synchronization in wireless ad hoc networks is a challenging and critical problem. It is fundamental for efficient information fusion, control and decision in distributed systems. Previously, distributed data synchronization was mainly studied in the latency-tolerant distributed databases, or assuming the general model of wireless ad hoc networks. In this paper, we propose a pair of linear network coding (NC) and all-to-all broadcast based fast data synchronization algorithms for wireless ad hoc networks whose topology is under operator's control. We consider both data block selection and transmitting node selection for exploiting the benefits of NC. Instead of using the store-and-forward protocol as in the conventional uncoded approach, a compute-and-forward protocol is used in our scheme, which improves the transmission efficiency. The performance of the proposed algorithms is studied under different values of network size, network connection degree, and per-hop packet error rate. Simulation results demonstrate that our algorithms significantly reduce the times slots used for data synchronization compared with the baseline that does not use NC.
\keywords{all-to-all broadcast; data synchronization; distributed system; network coding; wireless ad hoc network; UAV.}
\end{abstract}
\section{Introduction}
\label{Introduction}
Wireless ad hoc networks, such as the networked sensors, robots and unmanned aerial vehicles (UAVs), constitute a distributed, flexible and cooperative information-sharing system \cite{Meng_2015:SkyStitchAC,Balasingam2015:Robust_collaborative,Fei2017:Wireless_Sensor}. 
Fast data synchronization among network nodes is important for wireless ad hoc networks, since it is expected to provide essential information reliably for high-layer real-time decision and control algorithms. Unfortunately, in general this is a great challenge, because in many cases the network topologies and the wireless channels between nodes are highly dynamic and random \cite{Xie2009:Multi-Channel,Lv2016:Localization}. 

Owing to the openness of wireless channels, all-to-all broadcast has the potential to serve as a highly efficient approach for achieving fast data synchronization in wireless ad hoc  networks. However, the transmission efficiency of all-to-all broadcast is still constrained by the conventional store-and-forward protocol, which does not take advantage of more sophisticated data processing and thus limits the distributed system's efficiency of sensing and reacting to the environment. The network coding (NC) technique \cite{Ahlswede2000:Network_information,
Bassoli2013:netwoek_survey} offers an attractive approach for data synchronization in distributed systems, since it is capable of reducing the total time cost of data synchronization by exploiting both the broadcast property of wireless channels and the XOR operation in NC, thus increasing the effective data transmission rate \cite{Katti2008:Wireless_Network_Coding,Cao2013:Two-Way_Relay}.

Regarding the related work, in \cite{Asterjadhi2010:coding-based_protocols} the authors proposed a proactive NC scheme for all-to-all broadcast while using the random access schemes of IEEE 802.11 and considering several different network topologies. In \cite{Duan2015:all-to-all_broadcasting} an all-to-all broadcast protocol was designed for wireless ad hoc networks that use directional antennas, but NC was not considered. In \cite{Ma2013:Reliability_all-to-all} the reliability of a random neighbor NC  based all-to-all broadcast approach was analyzed. In \cite{Borkotoky2019:All-to-All_Broadcast} an adaptive transmission protocol suite integrated with the random linear NC was proposed, where the modulation and channel coding parameters are adaptively selected for each packet to improve the packet loss performance.  

However, all the above contributions are designed for general ad hoc networks, while ignoring the unique properties of specific systems. There indeed exist some particular scenarios, where the original distributed data synchronization problem can become less challenging or enjoy more benefits. For instance, compared with the data integrity and security, the latency requirement is less stringent for the data synchronization in distributed databases \cite{Choi_2009:Synchronization, Son1988:Data_Management}. %where the data synchronization can be finished in a longer time duration.
Additionally, more benefits can be gleaned in the scenario where the network topology is under the operator's control. Topology control is important and practical for wireless ad hoc networks, as it is beneficial for reducing energy consumption (thus extending the network lifetime) and radio interference (thus increasing the network communication capacity) \cite{Paolo2005:Topology_Control, Chen2002:Topology_Maintenance, Rajmohan2002:Topology_Control_Survey}. By using topology control, a more stable and convenient graph representation of the network can be obtained. Wireless ad hoc networks with controlled topology have found applications in many areas, such as the flight formation in military operations, the truck platooning in Internet of Vehicles (IoV) as highlighted by 5G, and the low earth orbit (LEO) satellite constellations etc. 

In this paper we propose a pair of linear NC and all-to-all broadcast based fast data synchronization algorithms for wireless ad hoc networks with controlled topology.  We study the performance of the proposed algorithms with a large number of randomly generated network topology samples. It is shown that the average gain in terms of the time-slot usage reduction upon applying the proposed algorithms can be over five times compared with the time-division multiple-access (TDMA) based all-to-all broadcast algorithm that does not use NC. Furthermore, this substantial gain is achievable in a wide range of network topologies, in particular for the topologies that have a low or medium degree of network connectivity.  

\section{System Model}\label{Sec:intro}
As shown in Figure~\ref{fig:System Model}, we consider an ad-hoc network that consists of $N$ nodes connected in any arbitrary topology. We assume that for each node that carries out linear NC, the corresponding decoding is performed by its immediate neighbouring nodes.  For each hop, a single action of transmitting-and-receiving occupies one time slot. The data synchronization throughout the network is achieved as follows. The node $n$ broadcasts its own data block $p_n$, whose original or processed copy is then disseminated to the other $N-1$ nodes by relaying through their respective neighbouring nodes, where $n=1, 2, \cdots, N$. The number of neighbouring nodes of node $n$ and the number of data blocks stored at node $n$ are denoted as $m_n$ and $d_n$, respectively. It is required that all the nodes in the network obtain the set of data blocks ${\mathbb{A}} = \left\{{{p_1},{p_2},\cdots,{p_N}} \right\}$ as fast as possible, which means that ${d_n}$ must equal 
$N$ after the data synchronization is achieved throughout the network, i.e., $\max{\{d_n\}} = N$, $\forall n= 1, 2, \cdots, N$. 

In a given time slot the node $n$ shares its data blocks with its neighbouring nodes $n_k$, where $1 \le k \le m_n$, $1 \le n_k \le N$ and $n_k \neq n$. To this end, we define the following sets:

\begin{itemize}
\item ${{\mathbb{A}}_{n}}{\rm{ = }}\left\{ {{p_{1}},{p_{2}},\cdots,{p_{{d_n}}}} \right\}$: The data blocks stored at node $n$.

\item $\bar{\mathbb{A}}_{n_k} = \{ \left. {{p_j}} \right|{p_j} \in {\mathbb{A}}, {p_j} \notin {{\mathbb{A}}_{n_k}}, j = 1, 2, \cdots, N\}$: The data blocks that node ${n_k}$ has not obtained.
\end{itemize}

Then $\bar{\mathbb{A}}_{n_k} \cap {\mathbb{A}}_n$ represents the data blocks that node $n_k$ can obtain from node $n$.

\begin{figure}[t]
\begin{center}
\includegraphics[angle = 0,width = 2.3in]{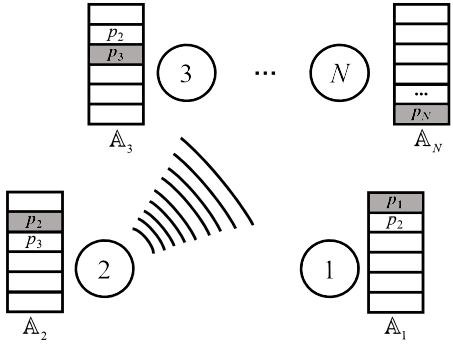}
\caption{The system model of the wireless ad hoc network considered, where the shaded rectangle $p_n$ denotes the data block stored at node $n$ in the initial stage. } 
\label{fig:System Model}
\end{center}
%\vspace{-0.5cm}
\end{figure}

\section{The Proposed Data Synchronization Algorithms}
\label{Sec:Data_Sync}

Below we describe the proposed data synchronization algorithms in detail, whose flowchart is shown in Figure~\ref{fig:Synchronous strategy}.  

\textbf{Step 0:} This is the initialization stage, where each node carries out data acquisition by monitoring its own state or the state of the environment. Note that it is possible for a single node to have multiple data blocks.  

\textbf{Step 1:}
Each node broadcasts the data blocks, which are stored on it and encapsulated into packets, to its neighbours in a TDMA manner. 

\textbf{Step 2:} 
If a packet is received correctly and network-coded, it is fed into the network-decoding module deployed on the node that receives the packet, and the data blocks extracted from the decoder are stored on the node. If the packet is not received correctly, each node updates its data block storage status.  Otherwise, the packet is directly stored on the node in the format of data blocks, and each node updates its data block storage status accordingly. 

\textbf{Step 3:} Check whether all the data blocks are synchronized in all nodes. This is achieved at each node by examining if $d_n = N$. If yes, the algorithms terminate. Otherwise, select an appropriate transmitting node that contributes on average the maximum innovative data to each neighbouring node according to \eqref{eq:node_selection} (\textit{optional}, used in Alg. 2, by examining the feedback from the subsequent data block selection (DBS), as elaborated in Sec.~\ref{section:selection}), and an appropriate set of data blocks that are innovative and decodable to the maximum number of the neighbouring nodes of the selected node according to \eqref{eq:data_block_selection_criterion} (\textit{mandatory}, used in both Alg. 1 and Alg. 2, by exploiting the output of the decoder in Step 2, as elaborated in Sec.~\ref{subsec:re-encoding}). These selected data blocks are stored in their host node to participate in the subsequent network-encoding operation.  

\textbf{Step 4:}
Carry out network-encoding operation with respect to the data blocks selected, and jump to Step 1, where the above selected node broadcasts the network-coded data blocks to their respective neighbours. This process is repeated until all the data blocks on all the nodes are synchronized. 

Note that the DBS results are fed back to serve as an input for the node selection (NS) module, according to the outputs of the decoding module introduced in Sec.~\ref{subsec:decoding}. 

\begin{figure*}[t]
\begin{center}
\includegraphics[angle = 0,width = 6 in]{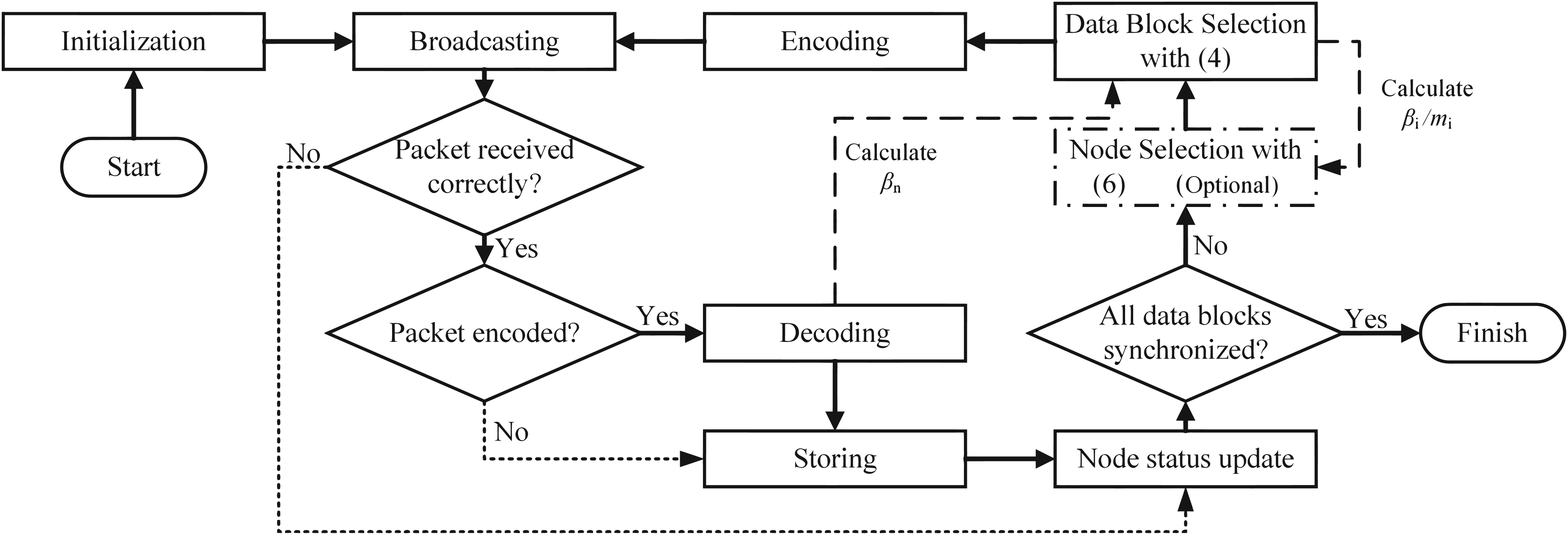}
\caption{The flowchart of the proposed data synchronization algorithms, where the node selection is only invoked in Alg. 2.} 
\label{fig:Synchronous strategy}
\end{center}
\end{figure*}

\subsection{Decoding}
\label{subsec:decoding}
For each TX-RX pair, the receiver is anticipated to decode the packets sent by the transmitter and then keep the data blocks that are absent from this receiver previously. In other words, each receiver is only interested in the data blocks that are innovative to itself. To clarify our design philosophy, the following theorem is given. 
%\textbf{Theorem}: The receiver is able to decode the packet only if one data at most is unknown.

%\newtheorem{theorem}{Theorem}
\begin{theorem}
For packets generated by linear NC, the receiver is capable of decoding the packet if at most one component data block is unknown.
\end{theorem}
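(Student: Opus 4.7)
The plan is to argue directly from the definition of a linear network-coded packet. A packet produced by linear NC is, by construction, a finite-field linear combination $q = \sum_{i=1}^{L} \alpha_i p_i$ of its component data blocks $p_1,\ldots,p_L$, where the coefficients $\alpha_i$ belong to the underlying field and are assumed to be attached to, or known by, the receiver (e.g., via a coefficient vector header, as is standard and implicit in the encoding/decoding modules described in Sec.~\ref{subsec:decoding}). I would split the argument into the two cases covered by the phrase ``at most one component data block is unknown.''

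First I would dispose of the trivial case in which zero components are unknown: the receiver already possesses every $p_i$, so there is nothing to decode, and the statement holds vacuously. Next, for the case of exactly one unknown component, say $p_j$, I would write
\begin{equation*}
q \;=\; \alpha_j p_j \;+\; \sum_{i \neq j} \alpha_i p_i,
\end{equation*}
and observe that every term in the sum on the right is known to the receiver. Hence the receiver can compute
\begin{equation*}
\alpha_j p_j \;=\; q \;-\; \sum_{i \neq j} \alpha_i p_i,
\end{equation*}
and recover $p_j$ by left-multiplying by $\alpha_j^{-1}$ in the field.

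The only subtlety, and the one place where the proof requires a small justification rather than bare arithmetic, is ensuring that $\alpha_j$ is invertible. I would address this by noting that if $\alpha_j = 0$ then $p_j$ is not actually a component of $q$, contradicting our labelling of $p_j$ as the unknown component; hence $\alpha_j$ is a nonzero element of the field and therefore has a multiplicative inverse. I do not expect any genuine obstacle here, since the statement is essentially a restatement of the one-unknown Gaussian-elimination step; the main thing to be careful about in the write-up is to state the field-invertibility assumption cleanly and to make explicit that the coefficient vector is available at the receiver, since the theorem implicitly relies on both.
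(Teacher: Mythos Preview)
Your argument is correct and in fact slightly more general than what the paper needs: you work over an arbitrary finite field with arbitrary nonzero coefficients, whereas the paper's encoding is specifically bitwise XOR, i.e., all coefficients equal to $1$ over $\mathrm{GF}(2)$, so the invertibility discussion is trivially satisfied there.

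The paper takes a different route. Rather than a one-line algebraic rearrangement, it proves the statement by mathematical induction on the number $L$ of known component blocks: the base case $L=1$ is immediate, and for the inductive step one peels off a single known block $a_{r+1}$ from $P = x \oplus a_1 \oplus \cdots \oplus a_{r+1}$ to reduce to the $L=r$ case. Your direct subtraction-and-inversion argument is more economical and makes the underlying linear-algebra content transparent; the paper's inductive proof, by contrast, stays closer to the XOR formulation actually used in the encoding and avoids any appeal to field structure beyond the self-inverse property of XOR. Either approach suffices for this essentially trivial fact; the main practical difference is that yours would need to be specialized (or at least remarked upon) to match the paper's XOR-only setting.
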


%\textbf{Proving}: 
\begin{proof}
Let us use the mathematical induction method to prove the theorem. 
Firstly, we have the following proposition:  the component data block $x$ can be extracted from packet $P = x \oplus {a_1} \oplus {a_2} \oplus \cdots {a_l} \cdots \oplus {a_L}$  provided that all the component data blocks $a_l, l = 1, 2, \cdots, L$ are known, where $L \in {\mathbb{Z}_+} $ is a positive integer.

1) When $L=1$,  the proposition is obviously true.

2) Assume that the above proposition is true when $L = r$, with $r \in {\mathbb{Z}_+}$ being an arbitrarily given positive integer. Then $x$ can be solved from $P = x \oplus a_1 \oplus a_2 \oplus \cdots {a_l} \cdots \oplus a_r$ if and only if $a_l$ is known, $\forall l \in \{1,2, \cdots, r\}$.

3) When $L=r+1$, the packet $P$ is written as
\begin{equation}
\label{eq:proof}
P=x \oplus a_1 \oplus a_2 \oplus \cdots {a_l} \cdots \oplus a_r \oplus a_{r+1}.
\end{equation}

Then if $\forall l \in \{1,2, \cdots, r+1\}$, where $a_l$ is unknown, $x$ is obviously  unsolvable. 
Let us assume that at least one $a_l$ is known. For convenience, assume that $a_{r+1}$ is known. Then Eq. \eqref{eq:proof} is rewritten as  

\begin{equation}
P=P' \oplus a_{r+1},
\end{equation}
where $P'=x \oplus a_1 \oplus a_2 \oplus ... \oplus a_r$ and it is decodable. Furthermore, to decode the data block $x$,  $\forall a_l, l=1,2,...,r$ must be known because of the above statement 2).  

Thus the proof has been established.  
\end{proof}

Therefore, upon receiving the packet $P$, the node $n_{k}$ is able to obtain innovative data block from node $n$ if
\begin{equation}
\label{eq:decoding}
|{\mathbb{A}_n} \cap {\bar{\mathbb{A}}_{n_k}}| = 1.
\end{equation}

\subsection{The DBS operation and encoding}\label{subsec:re-encoding}

To maximize the gain of all-to-all broadcast,  the network encoding operation is expected to select the particular data blocks that enable as many neighbouring receivers as possible to decode the packet transmitted  and each of these receivers must obtain innovative data block from the decoding.
In other words, the network encoding should operate on the data blocks that are the solutions to the following optimization problem  
\begin{equation}\label{eq:data_block_selection_criterion}
\max_{{\tilde{\mathbb{A}}_n} \subseteq {\mathbb{A}_n}} {\beta _{n}},
\end{equation}
where $\tilde{\mathbb{A}}_n$ denotes the set of data blocks selected on the node $n$ and $ {\beta _n} $ is defined as the particular number of the neighbouring nodes that can decode and obtain innovative data from node $n$: 
\begin{equation}
\label{eq:benefit}
{\beta _n} = \sum\limits_j^{m_n} {{\varepsilon _{nj}}}. 
\end{equation}
More specifically, ${\varepsilon _{nj}}{\rm{ = }}\left\{ {\begin{array}{*{20}{c}}
1&{ |\tilde{{\mathbb{A}}}_n \cap {\bar{\mathbb{A}}_{n_j}}| = 1 }\\
0&{\textrm{otherwise}}
\end{array}} \right.$, and $j$ represents the index of receiving nodes. 
\eqref{eq:data_block_selection_criterion} is a combinatorial optimization problem, which can be solved by any established search method $f(\cdot)$ over all the subsets of ${\mathbb{A}_n}$.   

To be more efficient, the node $n$ can select data blocks to transmit from the set 
$\mathbb{B}=(\bar{\mathbb{A}}_{n_1} \cap {\mathbb{A}}_n) \cup (\bar{\mathbb{A}}_{n_2} \cap {\mathbb{A}}_n) \cup \cdots \cup (\bar{\mathbb{A}}_{n_{m_n}} \cap {\mathbb{A}}_n)$
with the specific approach $f(\cdot)$ introduced above\endnote{In other words, $\tilde {\mathbb A}_n \subseteq \mathbb{A}_n$ in \eqref{eq:data_block_selection_criterion} can be replaced with $\tilde {\mathbb A}_n \subseteq \mathbb{B}$, resulting in a smaller search space.}, and encodes these component data blocks into packet $P$ with the aid of the bitwise XOR operation, i.e., $P = {p_{d_1}} \oplus {p_{d_2}} \oplus \cdots \oplus {p_{d_j}}$, where $ \oplus $ stands for bitwise XOR, $1 \leq d_j \leq N$, $1 \leq j \leq | \tilde{{\mathbb {A}}}_n | $ and the selected data blocks ${p_{d_j}} \in \tilde{\mathbb{A}}_n = f (\mathbb{B})$. 

\subsection{The NS operation}
\label{section:selection}
For a given network topology, it is important to determine which node should broadcast its data to the neighbours at a particular time instant. In other words, it is necessary to carry out NS, in addition to DBS. Thus, based on $\beta_n$ we further consider the ratio of the neighbouring nodes that can be helped among all the neighbouring nodes of the selected node per broadcast or time slot. Then, the node is selected according to 
\begin{equation}\label{eq:node_selection}
\max_i {\beta _i}/{m_i},
\end{equation} 
where $i = 1, 2, \cdots, N$. By solving \eqref{eq:node_selection}, the node that enables the largest proportion of neighbour nodes to be capable of obtaining innovative data blocks is selected, thus the transmission efficiency is improved.

\section{Simulation Results and Discussions}\label{Sec:simulation}
\begin{figure*}[t]
\vspace{-0.5cm}
    \centering
	  \subfloat[Node size = 5 nodes]{
       \includegraphics[width=0.32\textwidth]{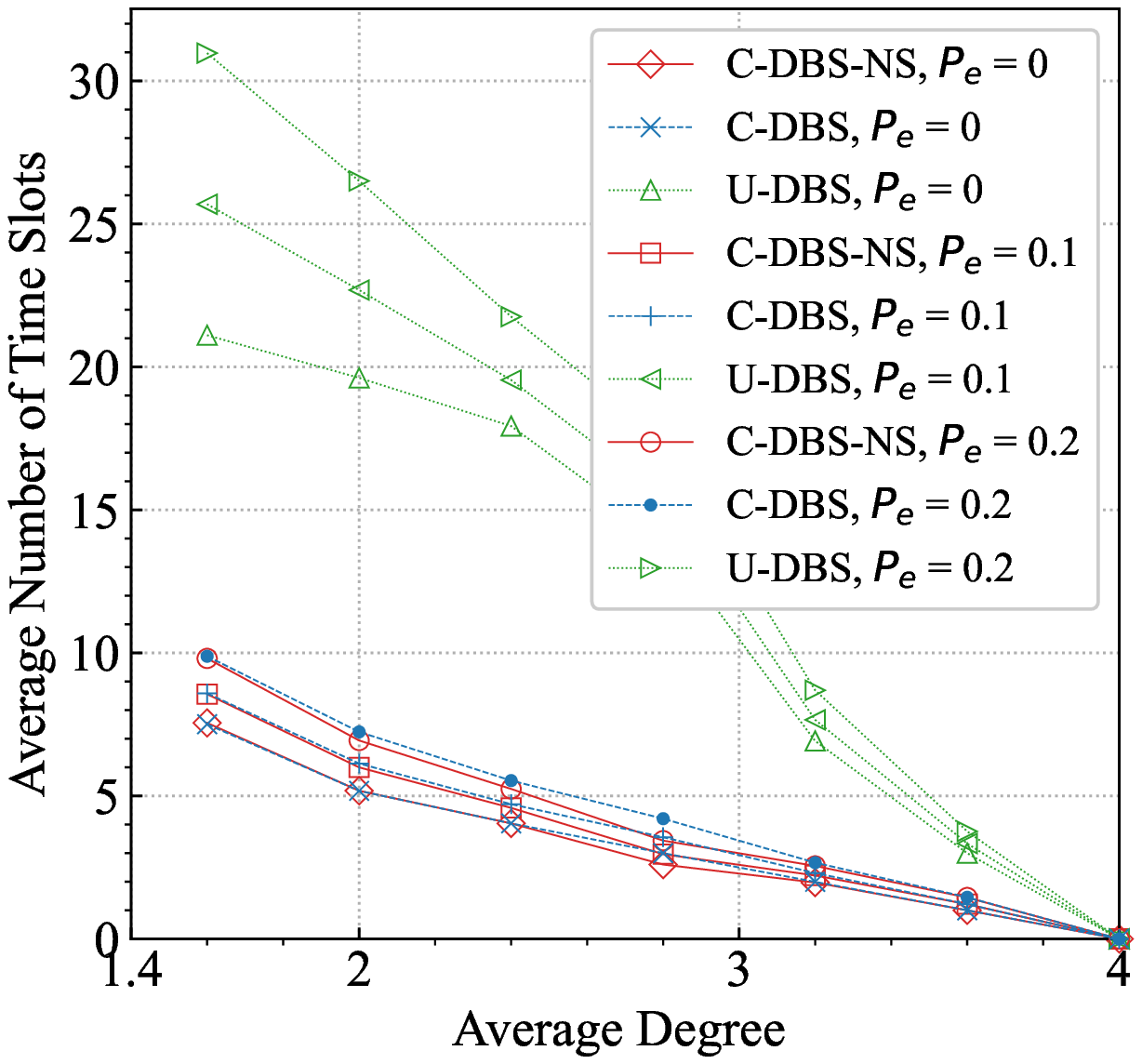}}
    \label{1a}
%    \end{figure}
%
%    
%    \begin{figure}[]
%    \vspace{-0.8cm}
%    \ContinuedFloat
	  \subfloat[Node size = 8 nodes]{
        \includegraphics[width=0.32\textwidth]{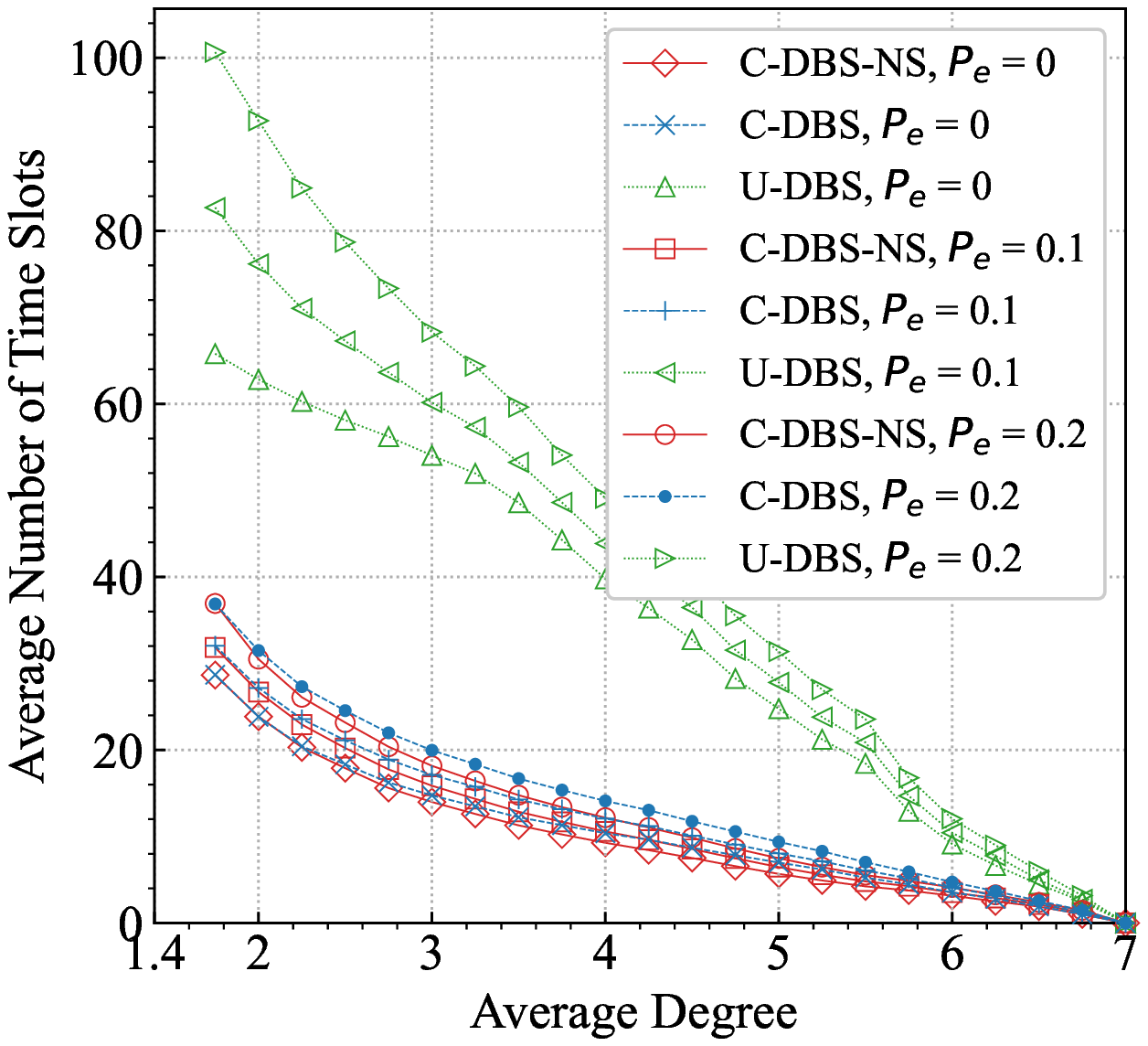}}
    \label{1b}
	  \subfloat[Node size = 11 nodes]{
        \includegraphics[width=0.32\textwidth]{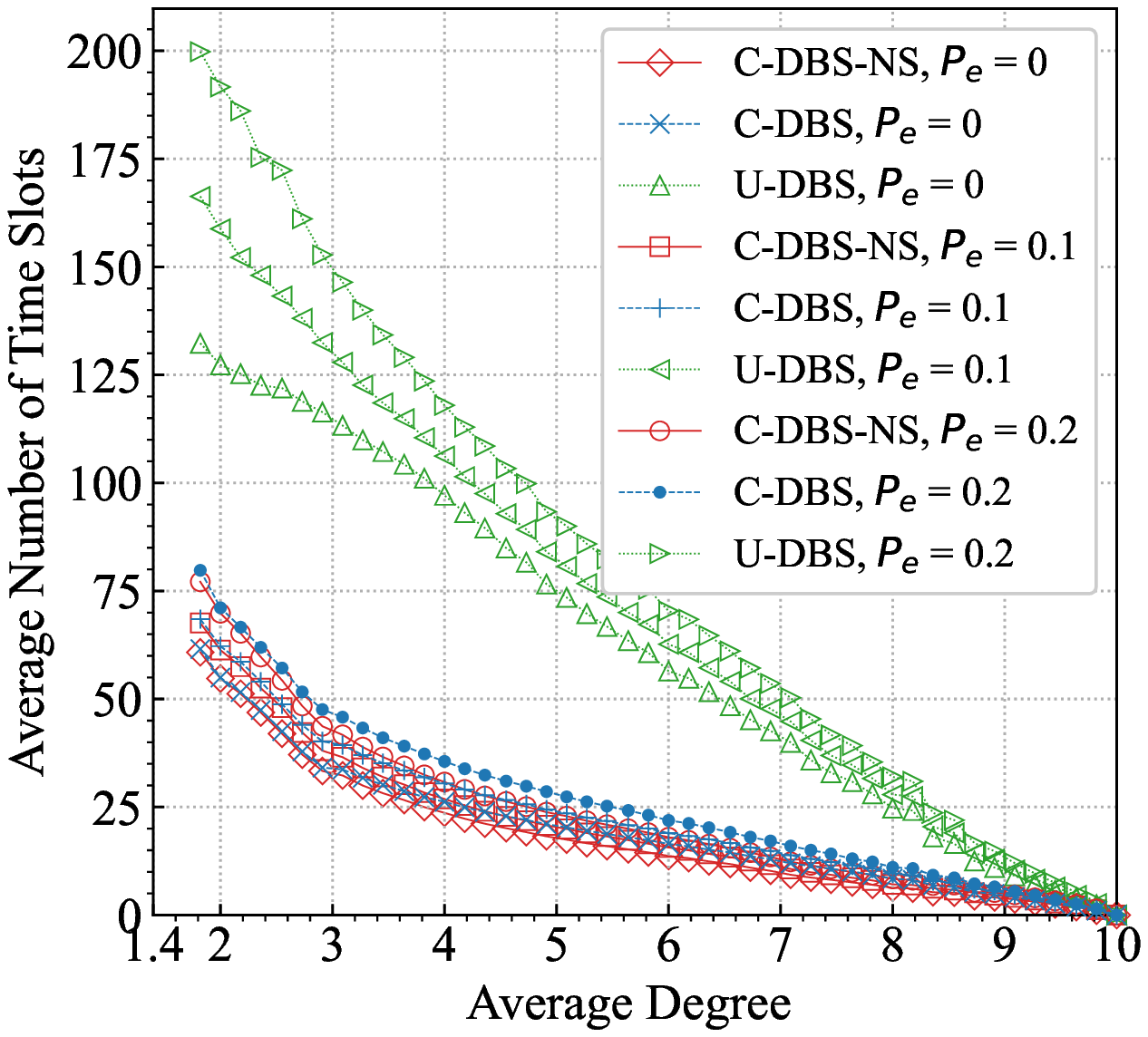}}
    \label{1c}%\hfill
	  \caption{The average number of time slots used by different schemes versus the average degree of a wireless ad hoc network.}
	  \label{fig:Slot} 
	  \vspace{-0.5cm}
\end{figure*}

In this section, the performance of the proposed all-to-all broadcast based  distributed data synchronization algorithms are demonstrated with Monte Carlo simulations. We consider a wireless ad hoc network, where the number of nodes is from 5 to 11.  For each network size, $10^5$ network samples are randomly generated, and in each network sample the positions of the nodes are generated with uniform distributions, thus guaranteeing a diverse range of network topologies to be examined.   Additionally, we only keep the network samples that can be represented by connected graph, in order to ensure that it is feasible to achieve data synchronization throughout the network. The comparisons are made among three data synchronization schemes:
\begin{enumerate}
\item \textit{Uncoded+DBS (baseline, abbreviated as U-DBS)}: The data synchronization is performed without using NC. In a single transmission cycle, each node broadcasts its individual data in turn following a given order, and each node uses a single time slot. DBS is also invoked by each node in this scheme, using the criterion similar to \eqref{eq:data_block_selection_criterion}, where $\tilde{\mathbb{A}}_n$ changes from a multi-element set to a single-element set. In other words, each node broadcasts the data block that is innovative to the largest number of receivers. This process is repeated in the next cycle, until the data synchronization across the network is achieved using the conventional store-and-forward protocol.   

\item \textit{Coded+DBS (Alg.~1, abbreviated as C-DBS)}: The data synchronization is achieved by employing the DBS based linear NC, as presented in Sec.~\ref{subsec:re-encoding}, while NS is not invoked. The transmission protocol can now be termed as \textit{compute-and-forward}. The remaining operations are the same as the scheme of \textit{U-DBS}. 

\item \textit{Coded+DBS+NS (Alg. 2, abbreviated as C-DBS-NS)}: In this scheme, not only the DBS based linear NC, but also the NS as introduced in Sec.~\ref{section:selection}, is invoked. Thus,  in each time slot, only the node selected broadcasts its data. This process is repeated until the data synchronization across the network is achieved. As such, the concept of \textit{transmission cycle} as used in the schemes of \textit{U-DBS} and \textit{C-DBS} becomes invalid, and the transmission protocol is also the \textit{compute-and-forward}. 
\end{enumerate}

Firstly, by using the concept of \textit{average degree} as defined in graph theory for characterizing the nodes' and network's connectivity\endnote{As the average degree increases, the network connection state is improved. }, in Figure~\ref{fig:Slot} we evaluate the average number of time slots used by the three schemes considered, under different values of network degree, network size and packet error rate $P_e$ of a single hop. 
The average number of time slots $S$ under a particular average degree $d$ is defined as 
%$S{}_{d} = {\sum\limits_i^{{N_d}} {s{_{id}}} }/{N_d},$
\begin{equation}
S{}_{d} = \frac{{\sum\limits_i^{{N_d}} {s{_{id}}} }}{N_d},
\end{equation}
where $s{_{id}}$ denotes the number of time slots used in the data synchronization process for the $i$th network sample under the average degree $d$, and $N_d$ denotes the number of randomly generated network samples, which essentially also represent different network topologies.

\begin{figure*}[t]
    \centering
    \vspace{-0.5cm}
	  \subfloat[Node size = 5 nodes]{
       \includegraphics[width=0.32\textwidth]{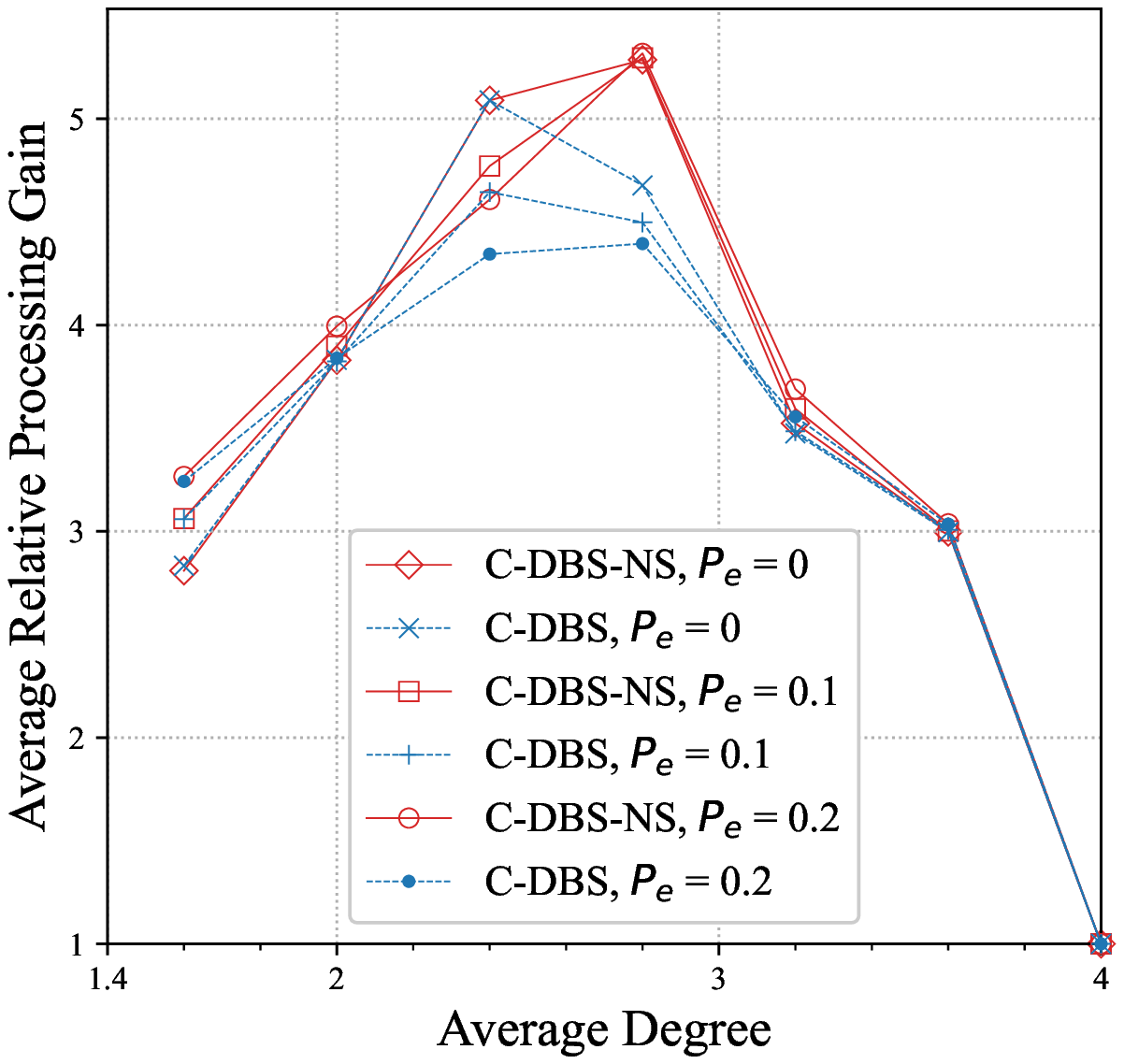}}
    \label{1a}
	  \subfloat[Node size = 8 nodes]{
        \includegraphics[width=0.32\textwidth]{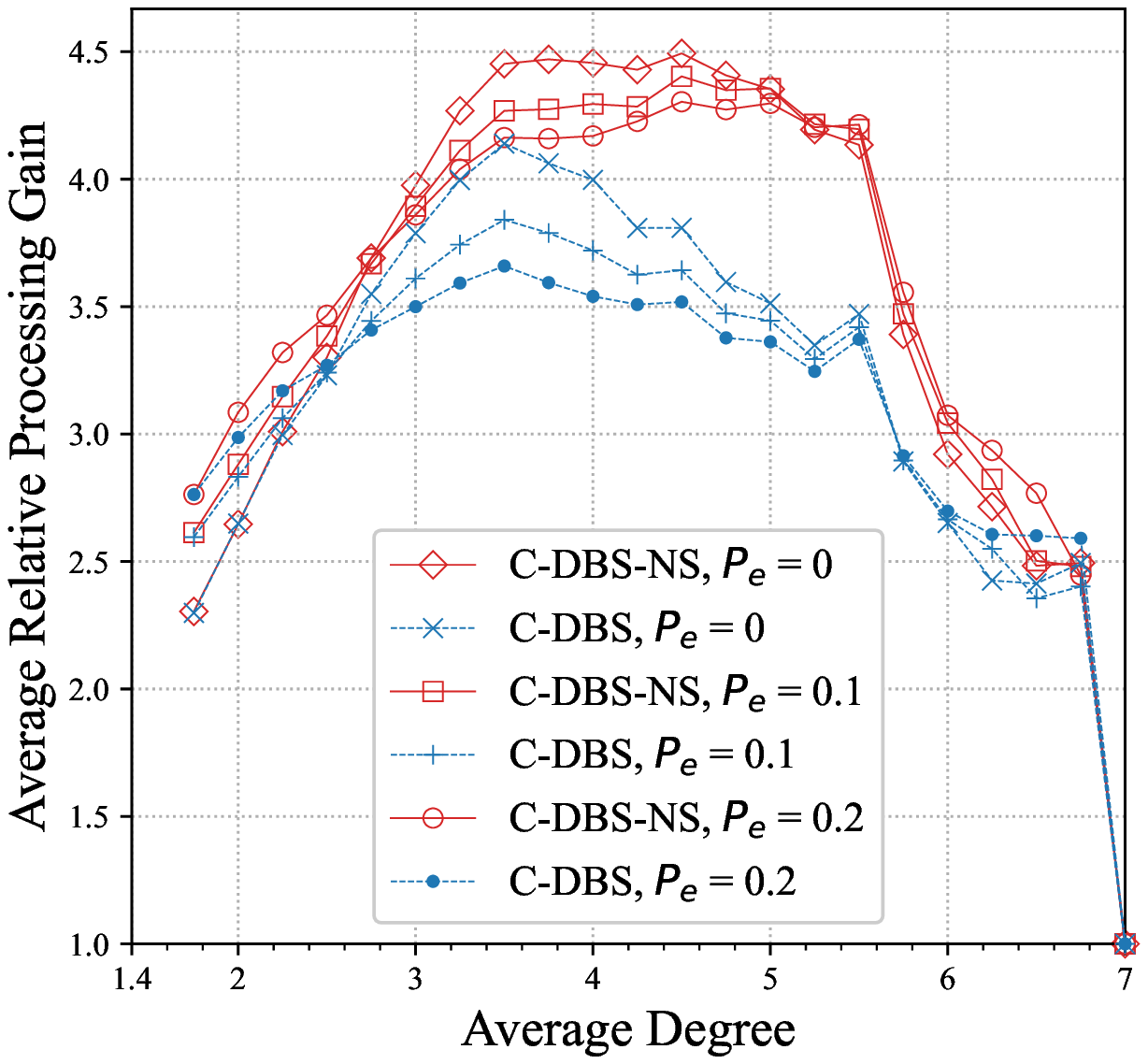}}
    \label{1b}
	  \subfloat[Node size = 11 nodes]{
        \includegraphics[width=0.32\textwidth]{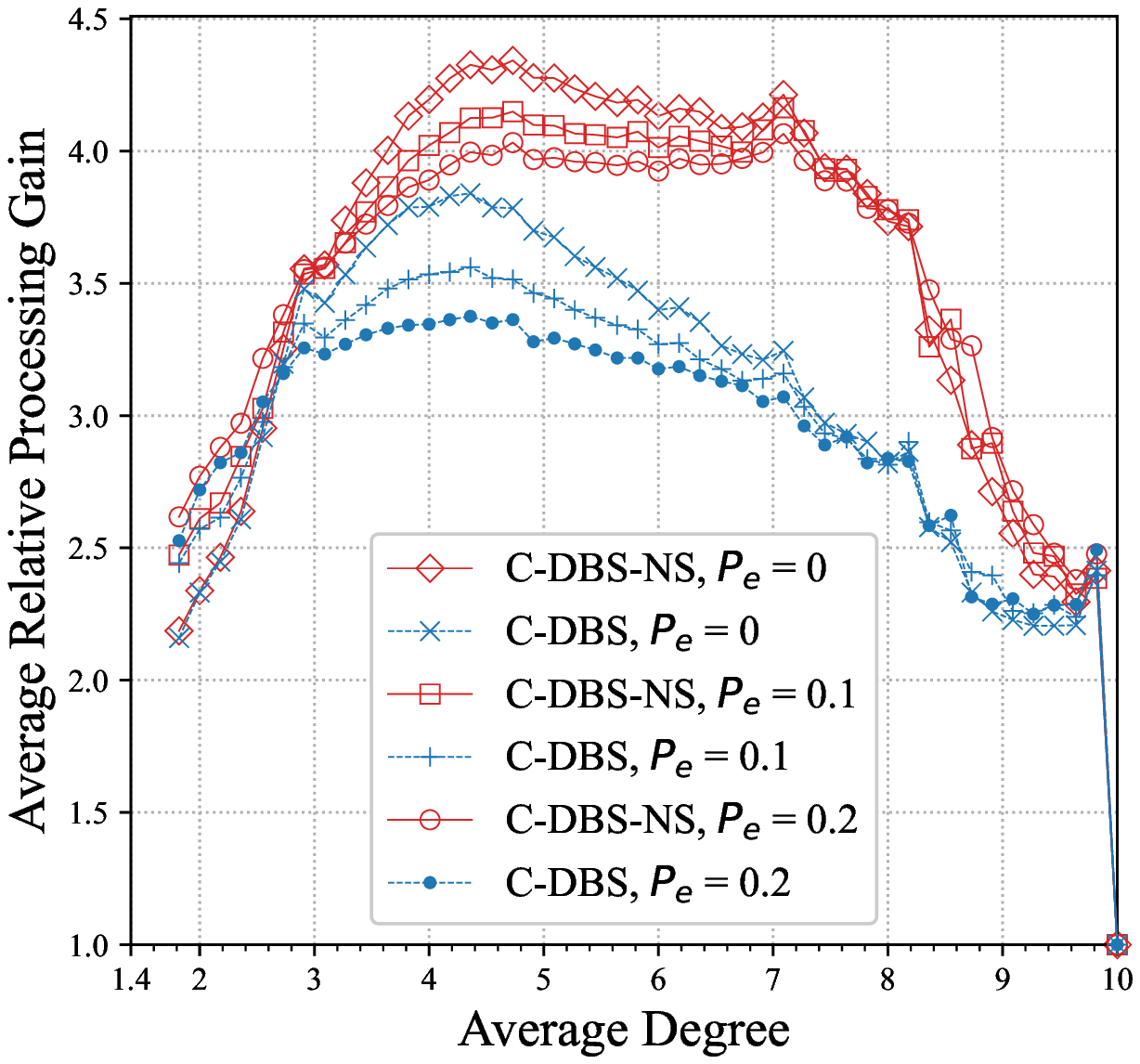}}
    \label{1c}%\hfill
	  \caption{Average gain versus average degree}
	  \label{fig:Gain} 
	  \vspace{-0.5cm}
\end{figure*}

We can observe that the proposed \textit{C-DBS-NS} scheme (the red curves) and \textit{C-DBS} scheme (the blue curves) both enjoy a significant time slot usage reduction, compared to the \textit{U-DBS} scheme (the green curves) in the weakly or moderately connected networks. Additionally, the \textit{C-DBS-NS} scheme performs best among the three schemes, especially under the moderately connected network topology, but the advantage of the \textit{C-DBS-NS} scheme over the \textit{C-DBS} scheme is limited. Furthermore, the time slot usage increases when we have a larger $P_e$ and a larger network size.

%\begin{figure*}[b]
%\begin{center}
%\includegraphics[angle = 0,width = 7in]{Slot3.eps}
%\caption{Slot versus average degree} 
%\label{fig:Slot}
%\end{center}
%\vspace{-0.5cm}
%\end{figure*}

%\begin{figure*}[htbp]
%\begin{center}
%\includegraphics[angle = 0,width = 7in]{Gain3.eps}
%\caption{Average gain versus average degree}
%\label{fig:Gain}
%\end{center}
%\vspace{-0.5cm}
%\end{figure*}

Secondly, the average relative processing gain (RPG) of the proposed two schemes to the baseline \textit{U-DBS} scheme under different values of network degree, network size and packet error rate $P_e$ of a single hop is depicted in Figure~\ref{fig:Gain}. Here the average RPG is defined as
\begin{equation}
{G_d} = \frac{{\sum\limits_i^{{N_d}} {{\hat {s}_{id}}/{{\tilde s}_{id}}} }}{{{N_d}}},
\end{equation} 
where ${\tilde s}_{id}$ and ${\hat s}_{id}$ respectively denote the number of time slots used in 
the data synchronization process of the \textit{U-DBS} scheme and of either the proposed \textit{C-DBS} or \textit{C-DBS-NS} schemes for the $i$th network sample under the average degree $d$. We observe that for either large or small average degree, the average RPG of the \textit{C-DBS} scheme is similar to that of the \textit{C-DBS-NS} scheme, which is a common phenomenon under different values of network size. For medium average degree, the average RPG of the \textit{C-DBS-NS} scheme is significantly higher than that of the \textit{C-DBS} scheme. In other words, for poorly connected networks, the \textit{C-DBS} scheme has a similar performance compared to the most complicated \textit{C-DBS-NS} scheme; for well connected networks, the data  synchronization performance (i.e., synchronization speed) of all the three schemes, including the simplest \textit{U-DBS} scheme, tends to be identical; while for moderately connected networks, the \textit{C-DBS-NS} scheme is shown to have the highest synchronization speed.

Finally, in Figure~\ref{fig:WholeTime} we compare the computational complexity of the three schemes in terms of the average number of FLOPS\endnote{Due to the randomness of the topology and the network size, it is difficult to give analytic results of the computational complexity of the three schemes.} under different numbers of nodes. We see that as the number of nodes rises, the average number of FLOPS of the \textit{C-DBS-NS} scheme grows much faster than that of the \textit{U-DBS} scheme, while the \textit{C-DBS} scheme has a moderately higher computational complexity than the  \textit{U-DBS} scheme. This implies that when choosing appropriate data synchronization algorithm for large-scale wireless ad hoc networks, both the time slot usage and the computational complexity of the algorithm should be considered. It is advised to use the divide-and-conquer approach to reduce the network size to a moderate value.

\begin{figure}[t]
\begin{center}
\includegraphics[angle = 0,width = 3.5in]{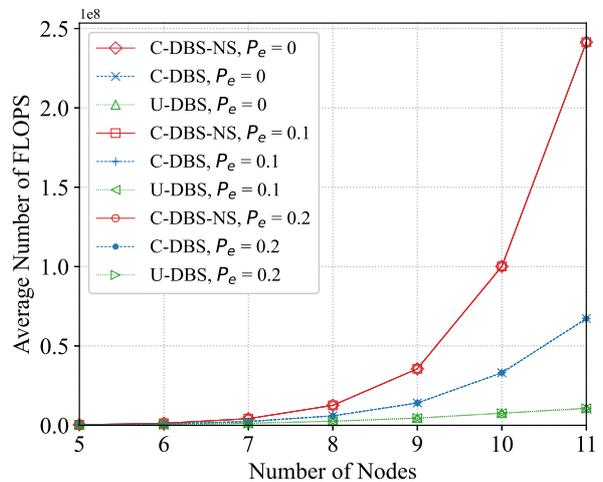}
\caption{Computational complexity versus the network size } 
\label{fig:WholeTime}
\end{center}
\vspace{-0.5cm}
\end{figure}
\section{Conclusion} \label{Sec:conclusion}
In this paper, we have proposed a pair of linear NC and all-to-all broadcast based fast data synchronization algorithms for wireless ad hoc networks that have controlled topology. For better exploiting the benefits of NC, the first algorithm \textit{C-DBS} exploits data block selection, while the second algorithm \textit{C-DBS-NS} exploits both the transmitting node selection and data block selection. Thus, compared with the conventional uncoded approach that uses store-and-forward protocol, a more efficient compute-and-forward protocol is used in our algorithms. We show that \textit{C-DBS-NS} performs best among the three schemes in the moderately connected networks, while in the weakly connected networks, \textit{C-DBS} achieves similarly good performance at a lower computational complexity compared to  \textit{C-DBS-NS}. For well connected networks, however, the three schemes have almost identical performance, hence there is no need to perform NC and the simplest \textit{U-DBS} approach suffices.  

\theendnotes

\section*{ACKNOWLEDGEMENT}
\label{ACKNOWLEDGEMENT}
This work is financially supported by Beijing Municipal Natural Science Foundation (No. L202012), the Open Research Project of the State Key Laboratory of Media Convergence and Communication, Communication University of China (No. SKLMCC2020KF008), and the Fundamental Research Funds for the Central Universities (No. 2020RC05). 

\bibliographystyle{gbt7714-numerical}
\bibliography{hudie_reference}
\newpage
%\section*{Biographies}
\biographies
\begin{CCJNLbiography}{hudie.eps}{Die Hu}received the B.Eng degree in communication engineering from Wuhan University, China, in Jul. 2018, and the M.Eng degree in system design of flight vehicle from China Academy of Launch Vehicle Technology in Jun. 2021. She was also a visiting student at Beijing University of Posts and Telecommunications from Jan. 2020 to Apr. 2021. She is currently a research engineer at China Academy of Launch Vehicle Technology. Her research interests include distributed information system, flying ad hoc networks and routing protocols.
\end{CCJNLbiography}

\begin{CCJNLbiography}{zhuxuejun.eps}{Xuejun Zhu}received the B.Eng degree in automatic control from National University of Defense Technology, China, in 1984, and the M.Eng degree in aircraft navigation and control from China Academy of Launch Vehicle Technology in 1987. She is currently a chief designer and professorial research fellow with China Academy of Launch Vehicle Technology. In 2019, she was elected to the Academician of Chinese Academy of Sciences. Her research interests include networking technologies, flight vehicle design and electrical system design.
\end{CCJNLbiography}

\begin{CCJNLbiography}{gongmin.eps}{Min Gong} received the B.Eng degree in information engineering from Beijing Institute of Technology, China, in 2005, and the PhD degree in information and communication engineering from Tsinghua University, China, in 2010. He is currently a chief designer with China Academy of Launch Vehicle Technology. His research interests include mobile ad hoc networks, distributed information processing, flight vehicle design and game theory.
\end{CCJNLbiography}

\begin{CCJNLbiography}{Shaoshi_Yang.eps}{Shaoshi Yang}received the B.Eng degree in information engineering from Beijing University of Posts and Telecommunications (BUPT), China, in 2006, and the PhD degree in electronics and electrical engineering from University of Southampton, U.K., in 2013. From 2008 to 2009, he was a researcher of WiMAX standardization with Intel Labs China. From 2013 to 2016, he was a Research Fellow with the School of Electronics and Computer Science, University of Southampton. From 2016 to 2018, he was a Principal Engineer with Huawei Technologies Co. Ltd., where he made significant contributions to the company’s products and solutions associated with 5G base stations, wideband IoT, and cloud gaming/VR. He is currently a Full Professor with BUPT. His research expertise includes 5G wireless networks, massive MIMO, iterative detection and decoding, mobile ad hoc networks, distributed artificial intelligence, and cloud gaming/VR. He is a member of the Isaac Newton Institute for Mathematical Sciences, Cambridge University, and a Senior Member of IEEE. He received the Dean’s Award for Early Career Research Excellence from  University of Southampton in 2015, the Huawei President Award of Wireless Innovations in 2018, the IEEE Technical Committee on Green Communications \& Computing (TCGCC) Best Journal Paper Award in 2019, and the IEEE Communications Society Best Survey Paper Award in 2020. He is an Editor for IEEE Systems Journal, IEEE Wireless Communications Letters, and Signal Processing (Elsevier). He was also an invited international reviewer of the Austrian Science Fund (FWF).
\end{CCJNLbiography}

\end{document}